\newtheorem{theorem}{Theorem}[section]
\newtheorem{proposition}[theorem]{Proposition}
\newtheorem{remark}[theorem]{Remark}	
\newtheorem{problem}[theorem]{Problem}
\title{\LARGE \bf 
Mean-Field Controllability and Decentralized Stabilization of Markov Chains, Part II: Asymptotic Controllability and Polynomial Feedbacks}
\author{Shiba Biswal, Karthik Elamvazhuthi, and Spring Berman% <-this % stops a space
\thanks{This work was supported by NSF Awards CMMI-1363499 and CMMI-1436960, by
ONR Young Investigator Award N00014-16-1-2605, and by the Arizona State University Global Security Initiative.}% <-this % stops a space
\thanks{Shiba Biswal, Karthik Elamvazhuthi, and Spring Berman are with the
	School for Engineering of Matter, Transport and Energy, Arizona State University,
	Tempe, AZ, 85281 USA,
	{\tt\small \{sbiswal, karthikevaz, Spring.Berman\}@asu.edu}}%
\thanks{
	{\tt\small }}%
}
\begin{document}

\maketitle
\thispagestyle{empty}
\pagestyle{empty}

%%%%%%%%%%%%%%%%%%%%%%%%%%%%%%%%%%%%%%%%%%%%%%%%%%%%%%%%%%%%%%%%%%%%%%%%%%%%%%%%
\begin{abstract}

This paper, the second of a two-part series, presents a method for mean-field feedback stabilization of a swarm of agents on a finite state space whose time evolution is modeled as a continuous time Markov chain (CTMC). The resulting (mean-field) control problem is that of controlling a nonlinear system with desired global stability properties. We first prove that any probability  distribution with a \textit{strongly connected support} can be stabilized using time-invariant inputs. Secondly, we show the asymptotic controllability of all possible probability distributions, including distributions that assign zero density to some states and which do not necessarily have a strongly connected support. Lastly, we demonstrate that there always exists a globally asymptotically stabilizing decentralized density feedback law with the additional property that the control inputs are zero at equilibrium, whenever the graph is strongly connected and bidirected. Then the problem of synthesizing closed-loop polynomial feedback is framed as a optimization problem using state-of-the-art sum-of-squares optimization tools. The optimization problem searches for polynomial feedback laws that make the candidate Lyapunov function a stability certificate for the resulting closed-loop system. Our methodology is tested for two cases on a five-vertex graph, and the stabilization properties of the constructed control laws are validated with numerical simulations of the corresponding system of ordinary differential equations.

\end{abstract}

%%%%%%%%%%%%%%%%%%%%%%%%%%%%%%%%%%%%%%%%%%%%%%%%%%%%%%%%%%%%%%%%%%%%%%%%%%%%%%%%
\section{INTRODUCTION} \label{section:introduction}

This paper addresses the problem of redistributing a large number of homogeneous agents among a set of states, such as tasks to be performed or spatial locations to occupy. While there exist several well established methods for control of multi-agents \cite{ren2008distributed,bullo2009distributed,mesbahi2010graph}, many of these control approaches do not scale well to very large agent populations.  
Hence an alternative approach for controlling multi-agent systems is by modeling the system as a fluid. This is justified by modeling each agent's dynamics by a continuous-time Markov chain (CTMC) and then the mean-field behavior of the system is determined by the {\it Kolmogorov forward equation} corresponding to the CTMC.
%This method enables scalable control design due to independence of the control methodology from agent numbers. 

% It is common to assume the distribution is governed by a Markov process. The goal of this paper is to design control laws to make a desired probability distribution invariant and globally stable under the closed-loop behavior of the system for swarm robotic applications.

A similar approach also exists when the agent dynamics evolves of discrete time. In this case the agents' state evolution over time is described by a discrete time Markov chain (DTMC). It is known that a discrete time Markov chain (DTMC) admits a stationary distribution under certain conditions of irreducibility, recurrency, and aperiodicity. These conditions are determined by the properties of the stochastic transition matrix of the process. If it is feasible to make a desired distribution invariant by choosing appropriate transition probabilities, then it is possible to compute {\it optimized} transition probabilities that guarantee the fastest rate of convergence to the invariant distribution. One of the first works to address this problem is \cite{boyd2004fastest}, which formulates a semidefinite program (SDP) whose solution is the set of transition probabilities that yield optimal exponential convergence.
In the case of a continuous time Markov chain (CTMC), the time evolution of the system is governed by a transition rate matrix, the \textit{generator} of the stochastic process. In \cite{berman2009optimized}, the authors present methods for computing optimized transition rates of a CTMC that drive the system to any strictly positive desired distribution at a fast convergence rate. 
%\spr{tunable parameters are the entries of} 
% formulate optimization problems that compute 

The works \cite{berman2009optimized} and \cite{boyd2004fastest} address an open-loop optimal control problem for the Kolmogorov forward equation, in which the control parameters, which are the transition probabilities or rates of the process, are constrained to be time-invariant. These approaches have been extended to the case of time-varying control parameters in several different contexts. In \cite{bandyopadhyay2013inhomogeneous,demir2015decentralized,mather2011distributed}, the authors design feedback controllers to drive a Markov chain to a target distribution. In contrast to traditional control approaches for Markov chains that use only the agent states  as feedback  \cite{puterman2014markov}, these works use the agent {\it densities} at different states as feedback and continuously re-compute the control parameters such that the target distribution is stabilized. Since this type of feedback generally requires global information about the densities at all states, these works have developed  {\it decentralized} control approaches, in which each agent's control parameters depend only on information that the agent can obtain from its local environment. This information may be derived from activity at the agent's current state or from activity that is communicated from an adjacent state. Such approaches minimize the inter-agent communication that is needed to implement the control strategy.  There has also been some recent work on mean-field games, where Hamilton-Jacobi-Bellman (HJB) based methods are used for control sysnthesis \cite{gomes2013continuous}. However, unlike HJB based methods in classical control theory, mean-field games based approaches do not result in feedback controllers. In this framework the synthesized control inputs are open-loop in nature and have the desired behavior only for a predefined fixed initial conditions of the mean-field model.

In this paper, the second of a two-part series (Part I is \cite{elamvaz2016lin}), we contribute three main results to the mean-field control problem for CTMCs. %, posed as questions below,
First, we demonstrate that it is possible to compute density-independent transition rates of a CTMC that make any probability distribution with a strongly connected support (to be defined later) invariant and globally stable. Similar work in \cite{acikmese2012markov} has characterized the class of stabilizable stationary distributions for DTMCs with control parameters that are time and density-invariant; we characterize this class of distributions for CTMCs with the same type of control parameters (see Proposition \ref{thm:ArbDist}).

Second, we have proven, that using time varying control parameters, asymptotic controllability of the system to any probability distribution is possible.

Third, we show that the density-dependent generator of a CTMC can be designed to have a decentralized structure and to converge to the zero matrix at equilibrium. This convergence of the control inputs to zero stops the agents from switching between states, and thus potentially wasting energy on unnecessary transitions, once the target distribution is reached. As pointed out in \cite{bandyopadhyay2013inhomogeneous}, the controllers developed in the prior work described above have nonzero control inputs at equilibrium, resulting in continued agent switching between states.
We present a proof by construction of our third main result for graphs of arbitrary size.  

In addition to our theoretical results, we develop an algorithm using sum-of-squares (SOS) tools to construct  density-dependent control laws with our desired properties.  Our nonlinear control approach in this work differs from our approach in \cite{elamvaz2016lin}, where we investigate linearization-based controllers for CTMCs with the same specifications.  While linear controllers have low computational complexity, they violate positivity constraints on the control inputs. To realize linear controllers in practice for our problem, we can implement them with rational feedback laws that mimic their behavior, as we show in \cite{elamvaz2016lin}. However, this approach results in unbounded controls.  In contrast, the controllers that we develop in this paper take the form of positive polynomials, and we can therefore guarantee their global boundedness. Additionally, in contrast with the approaches presented in \cite{bandyopadhyay2013inhomogeneous},\cite{demir2015decentralized} when agent dynamics are given by DTMCs, all computations for the control synthesis is done offline in our methodology. Hence, the computational burden on the agents is significantly much lower in our work in comparison.

\section{NOTATION}

We denote by $\mathcal{G = (V,E)}$ a directed graph with $M$ vertices, $\mathcal{V} = \lbrace 1,2,...,M \rbrace$, and a set of $N_{\mathcal{E}}$ edges, $\mathcal{E}\subset \mathcal{V} \times \mathcal{V}$.  We say that $e = (i,j) \in \mathcal{E}$ if there is an edge from vertex $i \in \mathcal{V}$ to vertex $j \in \mathcal{V}$. We define a source map $S : \mathcal{E} \rightarrow \mathcal{V}$  and a target map $T: \mathcal{E} \rightarrow \mathcal{V}$ for which $S(e) = i$ and $T(e) = j$ whenever $e = (i,j) \in \mathcal{E}$. There is a {\it directed path} of length $f$ from vertex $i\in \mathcal{V}$ to vertex $j\in \mathcal{V}$ if there exists a sequence of edges $\{e_k\}^f_{k=1}$ in $\mathcal{E}$ with $S(e_1) = i$, $T(e_f) = j$, and $S(e_j)=T(e_{j-1})$ for all $j \in \lbrace 2,3,...,M \rbrace$. We assume that the graph $\mathcal{G}$ is {\it strongly connected}, which means a directed path exists from any vertex $i \in \mathcal{V}$ to any other vertex $j \in \mathcal{V}$. We assume that $(i,i) \notin \mathcal{E}$ for all $i \in \mathcal{V}$. The graph $\mathcal{G}$ is said to be {\it bidirected} if $e = (S(e),T(e)) \in \mathcal{E}$ implies that $\tilde{e} = (T(e),S(e))$ also lies in $\mathcal{E}$. 

We define $\mathbb{R}^M$ as the $M$-dimensional Euclidean space, $\mathbb{R}^{M \times N}$ as the space of $M \times N$ matrices, and $\mathbb{R}_+$ as the set of positive real numbers. The notation ${\rm int}(B)$ refers to the interior of the set $B \subset \mathbb{R}^M$.
Given a vector $\mathbf{x}  \in \mathbb{R}^M$, $\mathbf{x}_i$ denotes the $i^{th}$ coordinate value of $\mathbf{x}$. 
%The $2-$norm of the vector $\mathbf{x}  \in \mathbb{R}^M$ is given by $\|\mathbf{x}\|_2 = \sqrt{\sum_i x^2_i}$. 
For a matrix $\mathbf{A} \in \mathbb{R}^{M \times N}$, $\mathbf{A}^{ij} $ denotes the element in the $i^{th}$ row and $j^{th}$ column of $\mathbf{A}$. Given a vector $\mathbf{y} \in \mathbb{R}^M$, for each vertex $i\in \mathcal{V}$, the set $\sigma_\mathbf{y}(i) \subset \mathcal{V}$ consists of all vertices $j$ for which there exists a directed path $\{e_i\}^f_{i=1}$ of some length $f$ from $j$ to $i$ such that $\mathbf{y}_{S(e_k)} = 0$ for each $1 \leq k \leq f-1$.

We say that a vector $\mathbf{x}^d \in \mathbb{R}^M $ has a {\it strongly connected support} if the subgraph $\mathcal{G}_{sub} = (\mathcal{V}_{sub},\mathcal{E}_{sub})$, defined by $\mathcal{V}_{sub} = \lbrace v \in \mathcal{V}: \mathbf{x}^d_v >0 \rbrace$ and $\mathcal{E}_{sub} =  \mathcal{V}_{sub} \times \mathcal{V}_{sub}\cap \mathcal{E}$, is strongly connected. Moreover, $\mathcal{V}_{sub}$ is called the {\it support} of the vector $\mathbf{x}^d$. The matrix $\mathcal{L}_{out}(\mathcal{G}) = \mathbf{D}_{out}(\mathcal{G}) - \mathbf{A}(\mathcal{G}) \in \mathbb{R}^{M \times M}$ denotes the {\it out-Laplacian} of the graph $\mathcal{G}$, where $\mathbf{D}_{out}(\mathcal{G})$ is the out-degree matrix of $\mathcal{G}$ and $\mathbf{A}(\mathcal{G})$ is the adjacency matrix of $\mathcal{G}$. $\mathbf{D}_{out}(\mathcal{G})$ is a diagonal matrix for which $(\mathbf{D}_{out}(\mathcal{G}))^{ii}$ is the total number of edges $e$ such that $S(e)=i$. The entries of $\mathbf{A}(\mathcal{G})$
are defined as $(\mathbf{A}(\mathcal{G}))^{ij} = 1$ if $(j,i) \in \mathcal{E}$, and 0 otherwise. When the graph $\mathcal{G}$ is bidirected, $\mathcal{L}_{out}(\mathcal{G})$ is the usual Laplacian of the graph, and we will drop the subscript and denote it by $\mathcal{L}(\mathcal{G})$.

\section{PROBLEM STATEMENT} \label{section:problem statement}
Consider a swarm of $N$ autonomous agents whose states evolve in continuous time according to a Markov chain with finite state space $\mathcal{V}$. As an example application of interest, $\mathcal{V}$ can represent a set of spatial locations that are obtained by partitioning the agents' environment. The graph $\mathcal{G}$ determines the pairs of vertices (states) between which the agents can transition. We define $u_{e}:[0,\infty) \rightarrow \mathbb{R}_+$ as a {\it transition rate} for each $e=(i,j) \in \mathcal{E}$. The evolution of the $N$ agents' states over time $t$ on the state space $\mathcal{V}$ is described by $N$ stochastic processes, $X_i(t) \in \mathcal{V}$, $i=1,...,N$. Each stochastic process $X_i(t)$ evolves according to the following conditional probabilities for each $e \in \mathcal{E}$:
\begin{equation}
	\mathbb{P}(X_i(t+h) = T(e) | X_i(t) = S(e)) =~ u_{e}(t)h + o(h). \label{eq:ConditionalProb}
\end{equation}
Here, $o(h)$ is the little-oh symbol and $\mathbb{P}$ is the underlying probability measure defined on the space of events $\Omega$ (which will be left undefined, as is common) induced by the stochastic processes $\lbrace X_i(t) \rbrace_{i=1}^N$. Let $\mathcal{P(V)}$ be the set of probability densities on $\mathcal{V}$. Then $\mathcal{P(V)}$ can be associated with the $(M-1)$ dimensional simplex, $\lbrace \mathbf{y} \in \mathbb{R}^M_+: ~\sum_i y_i = 1 \rbrace$. Let $\mathbf{x}(t) \in \mathbb{R}^n$ be the vector of probability distributions of the random variable $X(t)$ at time $t$, that is,
\begin{equation}
	x_i(t) = \mathbb{P}(X(t) = i), ~~~ i \in \lbrace 1,...,M \rbrace.
\end{equation}
In the case of continuous time and countable state space, the evolution of probability distributions is determined by the \textit{Kolmogorov forward equation}. Since the $X_i(t)$ are identically distributed random variables, the forward equation can be represented by a linear system of ordinary differential equations (ODEs),
\begin{equation}
	\dot{\mathbf{x}}(t) =\mathbf{G}^T\mathbf{x}(t), \hspace{3mm}
	\mathbf{x}(0) \in \mathcal{P(V)},
	\label{eq:OLsys1}
\end{equation}
where the matrix $\mathbf{G}$ is the $M \times M$ generator of the process. Each element $G^{ij}$, where $(i,j) = e \in \mathcal{E}$, is the probability per unit time, defined as the transition rate $u_e$ in Equation \eqref{eq:ConditionalProb}, of an agent switching from vertex $i = S(e)$ to vertex $j = T(e)$. The number of transitions between vertices $i$ and $j$ in $h$ units of time has a Poisson distribution with parameter $G^{ij} h$; see \cite{norris1998markov} for details. The elements of $\mathbf{G}$ have the following properties:
%are \spr{defined in terms of the transition rates $u_e(t)$} described earlier \spr{cite a reference for the structure?} and 
\begin{equation}
	0 \leq G^{ij}  < \infty, \hspace{3mm} G^{ii} = -\sum\limits_{j=1,j\neq i}^M G^{ij}.
\end{equation}

The system (\ref{eq:OLsys1}) can be cast in an explicitly control-theoretic form,
\begin{equation}
	\dot{\mathbf{x}}(t) =\sum\limits_{e \in \mathcal{E}}  u_e \mathbf{B}_{e} \mathbf{x}(t), \hspace{3mm} \mathbf{x}(0) \in \mathcal{P(V)}, \\
	\label{eq:OLsys2}
\end{equation}
where $\mathbf{B}_e$, $e \in \mathcal{E}$, are control matrices with entries
\[
B_e^{ij} = 
\begin{cases} 
-1 & \text{if } i = j=  S(e),\\
1 & \text{if } i= T(e), \hspace{1mm} j = S(e),\\
0       & \text{otherwise.}
\end{cases}
\]

The focus of this paper is to solve the problem of achieving arbitrary distributions using density feedback control. For clarity, we first consider the open-loop version of our control problem, before moving on to the closed-loop version. Given a desired probability distribution $\mathbf{x}^d$, the problem of computing  the transition rates (control parameters) $\lbrace u_e \rbrace_{e \in \mathcal{E}}$ to achieve the desired distribution can be framed as follows:
\begin{problem}
	Find positive control parameters $\lbrace u_e \rbrace_{e \in \mathcal{E}}$ such that $\lim_{t \rightarrow \infty} \| \mathbf{x}(t) - \mathbf{x}^d \| = 0  $ for all $\mathbf{x}^0 \in \mathcal{P(V)}$.
	\label{prob:ArbDist}
\end{problem}
We provide a complete characterization of the stationary distributions that are stabilizable for this case. Although density- and time-independent transition rates of CTMCs have been previously computed in an optimization framework \cite{berman2009optimized}, the question of which equilibrium distributions are feasible has remained unresolved for the case where the target distribution is not strictly positive on all vertices. While only strictly positive target distributions have been considered in previous work on control of swarms governed by CTMCs \cite{berman2009optimized,halasz2007dynamic}, we address the more general case in which the target densities of some states can be zero. This question was addressed in \cite{acikmese2012markov} for swarms governed by DTMCs. The problem has also been investigated in the context of consensus protocols \cite{chapman2015advection} for strictly positive distributions, where what is referred to as ''advection on graphs'' is in fact the forward equation corresponding to a CTMC. In our controller synthesis, we will relax the assumption of strict positivity for desired target distributions.

The main problem that we address in this paper is the following:
\begin{problem}
	Given a strictly positive desired equilibrium distribution $\mathbf{x^d} \in \mathcal{P(V)}$, compute transition rates $u_e: \mathcal{P(V)} \rightarrow\mathbb{R}_+$, $e \in \mathcal{E}$, such that the closed-loop system
	\begin{equation}
	\dot{\mathbf{x}}(t) =\sum_{e\in\mathcal{E}} u_e(\mathbf{x}) \mathbf{B}_e \mathbf{x}(t)
	\label{eq:CLsys1}
	\end{equation}
	satisfies $\lim_{t \rightarrow \infty} \| \mathbf{x}(t) - \mathbf{x}^d \| = 0  $ for all $\mathbf{x}^0 \in \mathcal{P(V)}$, with the additional constraint that $u_e(\mathbf{x}^d)=0$ for all$e \in \mathcal{E}$. Moreover, the density feedback should have a decentralized structure, in that each $u_e$ must be a function only of densities $x_i$ for which $i = S(e)$ or $i = S(\tilde{e})$, where $T(\tilde{e}) = S(e)$.
	\label{prob:Control}
\end{problem}

%\sbi{Karthik suggested we can remove this:}
%\spr{This was in the problem statement, but probably doesn't belong there:} Then $\tilde{\mathcal{G}}$ \spr{What is this?}\sbi{@KARTHIK} is a weakly connected graph \spr{that contains a?} \textit{rooted out-branching}, that is, for every $v \in \mathcal{V}$ there exists a directed path from $v$ to an element of $\mathcal{V}_s$ \spr{What is this?}. \spr{This definition should be in Notation.}\sbi{@KARTHIK}

%\sbi{@Karthik: Please check this paragraph} 
We note that we were able to describe the state evolution of the agents by system \eqref{eq:OLsys2} when the transition rates were density-independent because the agents' states were independent and identically distributed (i.i.d.) random variables in that case. However, when the density feedback control law 
$\{u_e(\mathbf{x})\}_{e \in \mathcal{E}}$ is used, the independence of the stochastic processes $X_i(t)$ is lost. This implies that the evolution of the probability distribution cannot be described by system \eqref{eq:OLsys2}. However, if we invoke the \textit{mean-field hypothesis} and take the limit $N \rightarrow \infty$, then we can model the evolution of the probability distribution according to a nonlinear Markov chain. In this limit, the number of agents at vertex $v \in \mathcal{V}$ at time $t \in [0, T]$ where, $T>0$, denoted by  $N_v(t,\omega)$ (where $\omega$ is used to emphasize that $N_v(\cdot)$ is a random variable, which denotes the number of agents in vertex $\mathcal{(V)}$), converges to $x_v(t)$ in an appropriate sense, provided that solutions of \eqref{eq:CLsys1} are defined until a given final time $T>0$. 
%Here we include $\omega$ in the argument of $N_v(\cdot)$ to emphasize that it is a random variable.  
A rigorous process for taking this limit in a stochastic process setting is described in \cite{ethier2009markov,kolokoltsov2010nonlinear}.  

%\sbi{@Karthik suggested we remove this:} We 
%%do not delve into this problem here, and instead 
%\spr{can then} formulate \spr{Problem \ref{prob:Control}?} as a feedback control problem for the deterministic system of ODEs \eqref{eq:OLsys2}. \spr{Why don't we refer to system \eqref{eq:CLsys1}?} 

%for each $r \in \lbrace 1,2,...,M \rbrace$ and each $t \in [0, T]$, 

%\kar{need to define $n_r(t,\omega)$..also might be useful to add the subscript in $n^N_r(t,\omega)$ to enable taking limit. otherwise seems ok..}

\section{ANALYSIS}
In this section, we first address the controllability problem in Problem \ref{prob:Control}, and then the stabilizability problem in Problem \ref{prob:ArbDist}.

\subsection{Controllability}
\begin{proposition}
	Let $\mathcal{G}$ be a strongly connected graph. Suppose that $\mathbf{x}^0 \in \mathcal{P(V)}$ is an initial distribution and $\mathbf{x}^d \in \mathcal{P(V)}$ is a desired distribution. Additionally, assume that $\mathbf{x}^d$ has strongly connected support. Then there is a set of parameters, $a_e \in [0, \infty)$ for each $e \in \mathcal{E}$, such that if $u_e(t)= a_e$ for all $t \in [0, \infty)$ and for each $e \in \mathcal{E}$ in system \eqref{eq:CLsys1}, then the solution $\mathbf{x}(t)$ of this system satisfies $ \| \mathbf{x}(t) - \mathbf{x}^d \| \leq M e^{-\lambda t}  $ for all $t \in [0, \infty)$ and for some positive parameters $M$ and $\lambda$ that are independent of $\mathbf{x}^0$.
	\label{thm:ArbDist} 
\end{proposition}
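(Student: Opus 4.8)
The plan is to construct the time-invariant rates $a_e$ by first building a generator $\mathbf{G}$ that has $\mathbf{x}^d$ as its (unique) stationary distribution and renders it globally exponentially stable, and then reading off $a_e = G^{S(e)\,T(e)}$. The construction proceeds in two stages, according to whether the support $\mathcal{V}_{sub} = \{v : \mathbf{x}^d_v > 0\}$ is all of $\mathcal{V}$ or a proper subset.

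First, suppose $\mathbf{x}^d$ is strictly positive. Since $\mathcal{G}$ is strongly connected, I would invoke a standard construction (cf. \cite{berman2009optimized}) to pick rates $a_e > 0$ on every edge $e \in \mathcal{E}$ so that the resulting $\mathbf{G}^T$ satisfies $\mathbf{G}^T \mathbf{x}^d = \mathbf{0}$; for instance one can set $a_e$ proportional to $1/\mathbf{x}^d_{S(e)}$ times suitable positive weights, or solve the linear feasibility problem for the detailed-balance-like conditions on a spanning structure, the solvability of which is guaranteed by strong connectivity. Because all off-diagonal entries of $\mathbf{G}^T$ are then strictly positive exactly on the edges of the strongly connected graph $\mathcal{G}$, the matrix $\mathbf{G}^T$ is an irreducible Metzler matrix with zero column sums; by Perron--Frobenius theory applied to $\mathbf{G}^T + c\mathbf{I}$ for large $c$, the eigenvalue $0$ is simple, the corresponding eigenvector is $\mathbf{x}^d$, and every other eigenvalue has strictly negative real part. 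Hence there exist $M, \lambda > 0$ with $\|e^{\mathbf{G}^T t}\mathbf{x}^0 - \mathbf{x}^d\| \le M e^{-\lambda t}$ for all $\mathbf{x}^0 \in \mathcal{P(V)}$ (using that $\mathbf{x}^0 - \mathbf{x}^d$ lies in the hyperplane $\sum_i y_i = 0$, on which $\mathbf{G}^T$ is Hurwitz), and $M,\lambda$ depend only on $\mathbf{G}$, not on $\mathbf{x}^0$.

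Second, for the general case where $\mathcal{V}_{sub} \subsetneq \mathcal{V}$, I would apply the strictly-positive construction to the strongly connected subgraph $\mathcal{G}_{sub}$ (which is strongly connected precisely by the strong-connectedness-of-support hypothesis) to obtain positive rates on $\mathcal{E}_{sub}$ making the restriction of $\mathbf{x}^d$ to $\mathcal{V}_{sub}$ globally exponentially stable within the simplex on $\mathcal{V}_{sub}$. For the remaining vertices $\mathcal{V} \setminus \mathcal{V}_{sub}$, I would use strong connectivity of the full graph $\mathcal{G}$ to choose, for each such vertex, at least one outgoing edge leading (eventually) into $\mathcal{V}_{sub}$ and assign it a positive rate, while assigning rate $0$ to all edges entering $\mathcal{V} \setminus \mathcal{V}_{sub}$ from elsewhere if needed — the point is to make $\mathcal{V}_{sub}$ absorbing and reachable from every vertex. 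The resulting $\mathbf{G}^T$ is block-triangular (after reordering): the $\mathcal{V}_{sub}$ block is the stable generator from the first step, and the $\mathcal{V}\setminus\mathcal{V}_{sub}$ block is a Metzler matrix with at least one strictly negative column sum in each irreducible piece, hence Hurwitz. Exponential decay of the mass on $\mathcal{V}\setminus\mathcal{V}_{sub}$ plus exponential convergence inside $\mathcal{V}_{sub}$ (driven by this vanishing inflow) gives the claimed bound $\|\mathbf{x}(t) - \mathbf{x}^d\| \le M e^{-\lambda t}$.

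The main obstacle is the second stage: one must verify that the inflow of mass from the transient vertices into $\mathcal{V}_{sub}$, which is itself exponentially decaying, does not destroy exponential convergence to $\mathbf{x}^d$ inside the support, and that the limiting distribution is exactly $\mathbf{x}^d$ rather than some other stationary distribution — this requires checking that the extra mass is correctly "routed" so that, in the limit, it redistributes according to $\mathbf{x}^d$. This follows from the fact that $\mathbf{x}^d$ restricted to $\mathcal{V}_{sub}$ is the unique stationary distribution of the stable sub-block and that total mass is conserved, but the bookkeeping (a variation-of-constants estimate on the block-triangular system) is the part that needs care. Everything else — existence of the feasible positive rates, and the spectral/Perron--Frobenius facts — is standard.
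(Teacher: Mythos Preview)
Your proposal is correct and follows the same high-level strategy as the paper: delete the edges leaving the support $\mathcal{V}_{sub}$ so that it becomes absorbing, retain enough edges from $\mathcal{V}\setminus\mathcal{V}_{sub}$ toward $\mathcal{V}_{sub}$ so that every vertex can reach the support, and choose rates on the support so that $\mathbf{x}^d$ is stationary there. The execution differs in two respects worth noting. First, in place of your generic ``standard construction'' on $\mathcal{G}_{sub}$, the paper uses a specific rescaling trick: it forms the out-Laplacian of the modified graph $\tilde{\mathcal{G}}$ (all edges from $\mathcal{V}_{sub}$ into its complement removed), invokes a rooted-in-branching criterion from \cite{chapman2015advection} to conclude that $-\mathcal{L}_{out}(\tilde{\mathcal{G}})^T$ already has a unique stationary distribution $\mathbf{z}$ supported on $\mathcal{V}_{sub}$, and then premultiplies by the positive diagonal matrix $\mathbf{D}$ with $D^{ii}=z_i/x^d_i$ on the support to turn $\mathbf{z}$ into $\mathbf{x}^d$. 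Second --- and this dissolves the ``main obstacle'' you flag --- the paper never does a block-triangular variation-of-constants estimate: once a single generator $\mathbf{G}^T$ on all of $\mathcal{V}$ has been built with $\mathbf{x}^d$ as its unique stationary distribution and a simple eigenvalue at $0$, the bound $\|e^{\mathbf{G}^T t}\mathbf{x}^0-\mathbf{x}^d\|\le M e^{-\lambda t}$ follows directly from the spectrum of $\mathbf{G}^T$, and your ``routing'' concern is absorbed into the uniqueness of the stationary distribution of the full chain. Your two-stage argument also works, but the bookkeeping you anticipate is avoidable.
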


\begin{proof}
	Let $\mathcal{V}_s \subset \mathcal{V}$ be the support of $\mathbf{x}^d$. From this vertex set, we construct a new graph $\tilde{\mathcal{G}}= (\mathcal{V},\tilde{\mathcal{E}})$, where $e=(i,j) \in \mathcal{E}$ implies that $e \in \tilde{\mathcal{E}} $ if and only if $i \in \mathcal{V}_s$ implies that $j \not\in \mathcal{V} \backslash \mathcal{V}_s$. Then it follows from \cite{chapman2015advection}[Proposition 10] that the process generated by the transition rate matrix $-\mathcal{L}_{out}(\mathcal{\tilde{G})}^T$ has a unique, globally stable invariant distribution if we can establish that $\tilde{\mathcal{G}}$ has a {\it rooted in-branching} subgraph.  This  implies that $\tilde{\mathcal{G}}$ must have a subgraph $\tilde{\mathcal{G}}_{sub} =(\mathcal{V},\mathcal{E}_{sub})$ which has no directed cycles and for which there exists a root node, $v_r$, such that for every $v \in \mathcal{V}$  there exists a directed path from $v$ to $v_r$. This is indeed true for the graph $\tilde{\mathcal{G}}$, which can be shown as follows. First, let $r \in \mathcal{V}$ such that $x^d_r > 0 $. From the assumption that $\mathcal{G}$ is strongly connected and the construction of $\tilde{\mathcal{G}}$, it can be concluded that there exists a directed path in $\tilde{\mathcal{E}}$ from any $v\in \mathcal{V}$ to $r$. Now, for each $n \in  \mathbb{Z}_+$, let $\mathcal{N}_n(r)$ be the set of all vertices for which there exists a directed path of length $n$ to $r$. For each $n >1$, let $\tilde{\mathcal{N}}_n(r) = \mathcal{N}_n(r) \backslash \cup_{m=1}^{n-1} \mathcal{N}_m(r)$. 
	We define $\tilde{\mathcal{E}}_{sub}$ by setting $e \in \tilde{\mathcal{E}}_{sub} $ if and only if $e \in \mathcal{E}$, $S(e) \in \tilde{\mathcal{N}}_n(r)$, and $T(e) \in \tilde{\mathcal{N}}_{n-1}(r)$ for some $n >1$. Then $\tilde{\mathcal{G}}_{sub} =(\mathcal{V},\mathcal{E}_{sub})$ is the desired rooted in-branching subgraph.
	
	The matrix $-\mathcal{L}_{out}(\mathcal{\tilde{G})}^T$ is the generator of a CTMC, since $\mathcal{L}_{out}(\mathcal{\tilde{G})}^T \mathbf{1} = \mathbf{0}$ and its off-diagonal entries are positive. Moreover, as we have shown, $\mathcal{\tilde{G}}$ has a rooted in-branching subgraph. Hence, there exists a unique vector $\mathbf{z}$ such that $-\mathcal{L(\tilde{G})}\mathbf{z} = \mathbf{0}$ and $\mathbf{z}  \in \mathcal{P}(\mathcal{V})$. The vector $\mathbf{z}$ is nonzero only on $\mathcal{V}_s$, since the subgraph corresponding to $\mathcal{V}_s$ is strongly connected. Then we consider a positive definite diagonal matrix $\mathbf{D} \in \mathbb{R}^{M \times M}$ such that $D^{ii}= z_i/x^d_i$ if $i \in \mathcal{V}_s$ and an arbitrary strictly positive value for any other $i \in \mathcal{V}$. The matrix $-\mathbf{D}\mathcal{L}_{out}(\mathcal{\tilde{G})}^T$ is also the generator of a CTMC. Moreover, $\mathbf{x}^d$ is the unique stationary distribution of the process generated by $-\mathbf{D}\mathcal{L}_{out}(\mathcal{\tilde{G})}^T$, since $\mathbf{x}^d$ lies in the null space of $\mathbf{G} = -\mathcal{L}_{out}(\mathcal{\tilde{G})} \mathbf{D}$ by construction. The simplicity of the principal eigenvalue at $0$ for the matrix $-\mathbf{D}\mathcal{L}_{out}(\mathcal{\tilde{G})}^T$ is inherited by the same eigenvalue of the matrix $\mathbf{G}$. Then the result follows by setting $a_e = G^{T(e)S(e)}$ for each $e \in \mathcal{E}$ and by noting that since $\mathbf{G}^T$ is the generator of a CTMC, and its eigenvalue at zero has the aforementioned properties is simple, then the rest of the spectrum of $\mathbf{G}$ lies in the open left half of the complex plane.
\end{proof}

This result can be extended to the case of time-varying control parameters. In particular, any $\mathbf{x}^d \in {\rm int}(\mathcal{P})$ can be reached in finite time from a given $\mathbf{x}^0 \in \mathcal{P}$ using time-varying control parameters $\lbrace u_e(t) \rbrace_{e \in \mathcal{E}}$. We restate the following theorem from our companion paper \cite{elamvaz2016lin}.
\begin{theorem}
	\label{ctrtheo}
	\cite{elamvaz2016lin}
	If the graph $\mathcal{G} = (\mathcal{V}, \mathcal{E})$ is strongly connected,
	then the system \ref{eq:OLsys2} is small-time globally controllable
	from every point in the interior of the simplex defined by $\mathcal{P}(\mathcal{V})$.
\end{theorem}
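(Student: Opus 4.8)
The plan is to prove a much stronger ``instantaneous'' controllability property and then simply connect the two interior points by a straight segment. Let $H = \{\mathbf{v}\in\mathbb{R}^M : \mathbf{1}^T\mathbf{v}=0\}$ be the tangent space of the simplex. Since every column of $\mathbf{B}_e$ sums to zero, $\mathbf{1}^T\mathbf{x}(t)$ is conserved along \eqref{eq:OLsys2} and the simplex $\mathcal{P}(\mathcal{V})$ (hence its interior) is invariant. The first step is the elementary identity $\mathbf{B}_e\mathbf{x} = \mathbf{x}_{S(e)}(\mathbf{e}_{T(e)}-\mathbf{e}_{S(e)})$, where $\mathbf{e}_i$ is the $i$-th standard basis vector.

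The key lemma I would establish is: for every $\mathbf{x}\in\mathrm{int}(\mathcal{P}(\mathcal{V}))$, the cone of admissible velocities $\{\sum_{e\in\mathcal{E}} u_e\mathbf{B}_e\mathbf{x} : u_e\ge 0\}$ equals all of $H$. Since $\mathbf{x}_{S(e)}>0$ in the interior, this cone is the convex cone generated by the vectors $\mathbf{e}_{T(e)}-\mathbf{e}_{S(e)}$, $e\in\mathcal{E}$. Two facts about strongly connected graphs finish the lemma: (i) these vectors span $H$, because a spanning tree already supplies $M-1 = \dim H$ linearly independent ones; and (ii) there is a strictly positive circulation, i.e. coefficients $\lambda_e>0$ with $\sum_{e\in\mathcal{E}}\lambda_e(\mathbf{e}_{T(e)}-\mathbf{e}_{S(e)})=\mathbf{0}$, obtained by covering $\mathcal{E}$ with directed cycles (every edge lies on one, by strong connectivity) and adding the associated unit circulations. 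A spanning family admitting a strictly positive linear dependence generates the whole subspace as a cone: given $\mathbf{w}\in H$ write $\mathbf{w}=\sum_e c_e(\mathbf{e}_{T(e)}-\mathbf{e}_{S(e)})$ and add a large multiple of the zero relation so that every coefficient $c_e + t\lambda_e$ becomes nonnegative.

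To conclude, fix $\mathbf{x}^0,\mathbf{x}^d\in\mathrm{int}(\mathcal{P}(\mathcal{V}))$ and $T>0$, and take the segment $\gamma(t)=(1-t/T)\mathbf{x}^0+(t/T)\mathbf{x}^d$, which stays in $\mathrm{int}(\mathcal{P}(\mathcal{V}))$ by convexity and has constant velocity $\dot\gamma\equiv(\mathbf{x}^d-\mathbf{x}^0)/T\in H$. By the lemma there are $u_e(t)\ge 0$ with $\sum_e u_e(t)\mathbf{B}_e\gamma(t)=\dot\gamma(t)$ for all $t\in[0,T]$, so $\gamma$ is a trajectory of \eqref{eq:OLsys2} steering $\mathbf{x}^0$ to $\mathbf{x}^d$ in time $T$; since $T$ was arbitrary, the system is small-time globally controllable from every interior point of $\mathcal{P}(\mathcal{V})$.

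The main (though modest) obstacle is turning the pointwise lemma into a bona fide input: one needs $t\mapsto u_e(t)$ measurable and locally bounded. This should be routine — the circulation $\lambda$ can be chosen independently of $\mathbf{x}$, and a particular solution $c_e(\mathbf{x})$ of the velocity equation can be taken smooth in $\mathbf{x}$ (solve on a fixed spanning-tree basis), so $u_e(\mathbf{x})=c_e(\mathbf{x})+t(\mathbf{x})\lambda_e$ with a continuous $t(\mathbf{x})$ works, and $\gamma([0,T])$ is compact in the interior. An alternative, more abstract route would be to verify the Lie algebra rank condition on $\mathrm{int}(\mathcal{P}(\mathcal{V}))$ — iterated brackets of the fields $\mathbf{B}_e\mathbf{x}$ generate the ``direct-edge'' fields $\mathbf{x}_i(\mathbf{e}_j-\mathbf{e}_i)$ along directed paths, which span $H$ at interior points — and then rescale time using that the system is driftless; but the velocity-cone argument is shorter and self-contained.
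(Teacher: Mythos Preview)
Your proposal is correct and follows essentially the same route as the paper: the theorem itself is only cited from the companion paper \cite{elamvaz2016lin}, but the Remark immediately after it sketches exactly your argument---conical combinations of $\{\mathbf{B}_e\mathbf{y}\}_{e\in\mathcal{E}}$ span the tangent space of $\mathcal{P}(\mathcal{V})$ at interior points (illustrated there via $-\mathbf{B}_{(i,j)}\mathbf{1}=\sum_{e\in\mu}\mathbf{B}_e\mathbf{1}$ along a return path $\mu$, which is your circulation idea), whence path controllability and hence small-time global controllability. Your write-up simply fills in the details the Remark leaves implicit.
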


\begin{remark}
	In fact, we can state the following broader result. If $\mathcal{G}$ is strongly connected, then the system is also {\it path controllable}: given any trajectory $\gamma(t)$ in $
	\mathcal{P}(\mathcal{V})$ that is defined over a finite time interval [0, T] and is once differentiable with respect to the time variable $t$, there exists a control law $\mathbf{u}:[0,T] \rightarrow [0,\infty)^{N_\mathcal{E}}$ such that the solution of the control system \eqref{eq:CLsys1} satisfies $\mathbf{x}(t) = \gamma(t)$ for all $t \in [0,T]$. This is true because conical combinations of the collection of vectors $\lbrace \mathbf{B}_e\mathbf{y} \rbrace_{e \in \mathcal{E}}$ span the tangent space of $\mathcal{P}(\mathcal{V})$ whenever $\mathbf{y}$ lies in the interior of $\mathcal{P}(\mathcal{V})$. For example, given a strongly connected graph $\mathcal{G}$, if $(i,j) \in \mathcal{E}$ and there exists a directed path $\mu$ from $j$ to $i$, then $-\mathbf{B}_{(i,j)} \mathbf{1} = \sum_{e \in \mu}\mathbf{B}_e \mathbf{1}$.
\end{remark}

As we mention in \cite{elamvaz2016lin}, this result cannot be extended to prove reachability of distributions that correspond to points on the boundary of $\mathcal{P(V)}$. On the other hand, the following theorem states that these boundary points are {\it asymptotically controllable}. A key difference between the following result and the results in Proposition \ref{thm:ArbDist} and Theorem \ref{ctrtheo} is that the target distributions need not have strongly connected supports.

\begin{proposition}
	Let $\mathcal{G}$ be a strongly connected graph. Suppose that $\mathbf{x}^0 \in \mathcal{P(V)}$ is the initial distribution, and $\mathbf{x}^d \in \mathcal{P(V)}$ is the desired distribution. Then for each $e \in \mathcal{E}$, there exists a set of time-dependent control parameters $u_e : \mathbb{R}_+ \rightarrow \mathbb{R}_+$, $e \in \mathcal{E}$, such that the solution $\mathbf{x}(t)$ of the controlled ODE \eqref{eq:CLsys1} satisfies $\lim_{t \rightarrow \infty} \mathbf{x}(t) = \mathbf{x}^d$.
\end{proposition}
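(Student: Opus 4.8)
The plan is to build an explicit trajectory in the interior of the simplex that converges to $\mathbf{x}^d$ and then to realize it exactly with nonnegative time-varying rates; the only way this can fail is if the trajectory touches the boundary at a finite time, or if the rates needed to follow it fail to exist, so the argument reduces to a careful choice of curve together with the tangent-cone property recorded in the remark following Theorem~\ref{ctrtheo}.

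First I would dispose of the case $\mathbf{x}^0 \notin {\rm int}(\mathcal{P(V)})$: applying the constant rates $u_e \equiv 1$, $e \in \mathcal{E}$, on a short interval $[0,t_0]$ produces $\mathbf{x}(t_0) = e^{t_0 \mathbf{G}^T}\mathbf{x}^0$, and since $\mathcal{G}$ is strongly connected the generator $\mathbf{G}^T$ is irreducible, so $e^{t_0\mathbf{G}^T}$ has strictly positive entries and $\mathbf{x}(t_0) \in {\rm int}(\mathcal{P(V)})$. Relabel time so that this strictly positive distribution is the initial condition $\mathbf{x}^0$ at time $0$. Next I would set $\gamma(t) = e^{-t}\mathbf{x}^0 + (1 - e^{-t})\mathbf{x}^d$ for $t \geq 0$. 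This curve is analytic, satisfies $\gamma(0) = \mathbf{x}^0$ and $\gamma(t) \to \mathbf{x}^d$, has $\sum_i \gamma_i(t) = 1$, and, because $\mathbf{x}^0$ is strictly positive, obeys $\gamma_i(t) \geq e^{-t}\mathbf{x}^0_i > 0$ for every finite $t$; hence $\gamma(t) \in {\rm int}(\mathcal{P(V)})$ for all $t \in [0,\infty)$, with no hypothesis on the support of $\mathbf{x}^d$. Also $\dot\gamma(t) = e^{-t}(\mathbf{x}^d - \mathbf{x}^0)$ with $\sum_i \dot\gamma_i(t) = 0$, so $\dot\gamma(t)$ is a tangent vector to $\mathcal{P(V)}$ at $\gamma(t)$.

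Then, by the remark following Theorem~\ref{ctrtheo}, the conical combinations of $\{\mathbf{B}_e\mathbf{y}\}_{e \in \mathcal{E}}$ span the tangent space $\{\mathbf{v} : \sum_i v_i = 0\}$ for every $\mathbf{y} \in {\rm int}(\mathcal{P(V)})$; in particular, for each $t$ there are coefficients $u_e(t) \geq 0$ with $\dot\gamma(t) = \sum_{e \in \mathcal{E}} u_e(t)\,\mathbf{B}_e\gamma(t)$. I would pick these coefficients to depend measurably (indeed locally boundedly) on $t$, e.g. by a measurable selection applied to the Carathéodory conical representations of $\dot\gamma(t)$; they are finite because $\gamma(t)$ stays in the interior, and in fact the ratios $\dot\gamma_i(t)/\gamma_i(t)$ are uniformly bounded (they equal $-1$ on the coordinates $i$ with $\mathbf{x}^d_i = 0$ and tend to $0$ elsewhere), so the rates can even be kept globally bounded. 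With this $\{u_e(\cdot)\}_{e\in\mathcal{E}}$ the curve $\gamma$ is, by construction, a solution of the linear time-varying system \eqref{eq:CLsys1}, and by uniqueness of solutions of that system it is the solution started at $\gamma(0) = \mathbf{x}^0$. Concatenating with the rates of the first step gives $u_e : \mathbb{R}_+ \to \mathbb{R}_+$, $e \in \mathcal{E}$, for which $\mathbf{x}(t) \to \mathbf{x}^d$.

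I expect the main obstacle to be exactly this last step: arranging the tracking rates to be simultaneously nonnegative, finite, and regular enough to define a trajectory all the way to $t = \infty$. The difficulty is that for an edge $e$ leaving a vertex $i$ with $\mathbf{x}^d_i = 0$ the vector $\mathbf{B}_e\gamma(t)$ shrinks to zero as $\gamma_i(t) \to 0$, so a poorly chosen approach to $\mathbf{x}^d$ could force the draining rates to blow up; the point is that the exponential schedule in the definition of $\gamma$ matches $\dot\gamma_i(t)$ to $\gamma_i(t)$ and keeps those ratios bounded. Everything else — getting into the interior, tangency of $\dot\gamma$, and the existence of a nonnegative conical representation — is immediate from strong connectivity and the remark.
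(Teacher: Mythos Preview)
Your proof is correct and takes a genuinely different route from the paper's. The paper partitions $\mathcal{V}$ into groups $\{\mathcal{V}_n\}$, each consisting of one root node $r$ with $x^d_r > 0$ together with zero-density vertices that can drain into $r$ along paths through other zero-density vertices; it then invokes Theorem~\ref{ctrtheo} to reach, in finite time, an intermediate interior point $\mathbf{x}^{in}$ carrying the correct total mass in each group, and finally applies \emph{constant} rates (zero across groups and out of roots, one otherwise) so that each group collapses onto its root via a rooted in-branching argument. The payoff of the paper's construction is structural: as noted right after the proof, it shows asymptotic controllability is achievable with piecewise-constant inputs having finitely many pieces. Your argument trades that conclusion for directness: you write down the explicit interior curve $\gamma(t) = e^{-t}\mathbf{x}^0 + (1-e^{-t})\mathbf{x}^d$ and track it using the tangent-cone remark, bypassing the combinatorial partition entirely and using neither Proposition~\ref{thm:ArbDist} nor Theorem~\ref{ctrtheo} beyond the spanning property. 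One small point worth tightening: the inference from bounded ratios $\dot\gamma_i/\gamma_i$ to globally bounded rates $u_e$ is correct but not immediate, and you do not actually need a measurable selection. Since $\dot\gamma(t) = e^{-t}(\mathbf{x}^d - \mathbf{x}^0)$ has fixed direction, choose once and for all nonnegative edge-flows $\tilde c_e$ with $\sum_e \tilde c_e(\mathbf{e}_{T(e)} - \mathbf{e}_{S(e)}) = \mathbf{x}^d - \mathbf{x}^0$ (possible by strong connectivity, as in the remark) and set $u_e(t) = e^{-t}\tilde c_e / \gamma_{S(e)}(t)$; this is constant on edges leaving vertices with $x^d_{S(e)} = 0$ and tends to zero on the others.
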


Before presenting the full proof of this proposition, we briefly sketch the proof for clarity. The proof is mainly based on Theorem \ref{ctrtheo} and uses an approach similar to that used in Proposition \ref{thm:ArbDist} to prove the existence of control inputs that stabilize desired distributions $\mathbf{x}^d$ with strongly connected support. The idea is to first partition the vertex set $\mathcal{V}$ into disjoint subsets $\lbrace \mathcal{V}_i \rbrace$ that each contain a single vertex $r$, called the {\it root node}, for which $x^d_r \neq 0$,  %for which the desired distribution takes a nonzero value on that node \spr{Do you mean that $x^d_v \neq 0$ only at that vertex, and $x^d_v = 0 at all other vertices?}. 
and some other vertices $v$,  called {\it transient nodes}, for which $x^d_v = 0$ %at which the desired distribution takes zero value 
and there is a directed path to the root node in $\mathcal{V}_i$. %The construction 
This partition will ensure that the subgraphs corresponding to $\lbrace \mathcal{V}_i \rbrace$ are at least weakly connected. Then using Theorem \ref{ctrtheo}, we can design control inputs that drive the solutions of the system \eqref{eq:OLsys2} exactly to an intermediate distribution $\mathbf{x}^{in}$, for which the total mass at the vertices in $\mathcal{V}_i$ is equal to the total mass required at the root node in $\mathcal{V}_i$. Such a distribution $\mathbf{x}^{in}$ necessarily exists in ${\rm int}(\mathcal{S})$. Then we invoke an argument made in Proposition \ref{thm:ArbDist} to ensure that all the mass at the transient nodes is directed toward their corresponding root nodes, which can be achieved using time-variant control inputs. This will establish asymptotic controllability of the boundary points of $\mathcal{P}\mathcal{(V)}$.

%Additionally, each subset $\mathcal{V}_i$ will contain some other vertices,

\begin{proof}
	We define the set $\mathcal{R} = \lbrace i: \hspace{1mm} x^d_i > 0, \hspace{2mm} 1 \leq i \leq M \rbrace$ with cardinality $N_{\mathcal{R}}$. Let $\mathcal{I}:\lbrace 1,2,...,N_\mathcal{R} \rbrace \rightarrow \mathcal{R} $  be a bijective map that %will be used to 
	defines an ordering on $\mathcal{R}$. Then we recursively define a collection $\lbrace{\mathcal{V}_n} \rbrace$ of disjoint subsets of $\mathcal{V}$ as follows:
	\begin{align}
	\mathcal{V}_1 = \lbrace \mathcal{I}(1)\} \cup \lbrace i \in \mathcal{V}: \hspace{1mm} x^d_i = 0 \hspace{2mm} s.t. \hspace{2mm} i \in \sigma_{ \mathbf{x}^d}(\mathcal{I}(1))  \rbrace \nonumber \\
	\mathcal{V}_n = \lbrace \mathcal{I}(n)\rbrace \cup \lbrace i \in \mathcal{V}: \hspace{1mm} x^d_i = 0 \hspace{2mm} s.t. \hspace{2mm} i \in \sigma_{ \mathbf{x}^d}(\mathcal{I}(n)) \hspace{2mm} \nonumber \\ 
	and \hspace{2mm} i \notin \cup_{k=1}^{n-1}\mathcal{V}_k)  \rbrace \nonumber
	\end{align}
	for each $n \in \lbrace 2,3,...,N_{\mathcal{R}}\rbrace$. We note that $\mathcal{V} = \cup_{n=1}^{N_{\mathcal{R}}}\mathcal{V}_{n}$. Let $\mathbf{x}^{in} \in {\rm int}(\mathcal{P(V)})$ be some element such that $\sum_{k \in \mathcal{V}_n}x^{in}_k = x^d_{\mathcal{I}(n)}$ for each $n \in \lbrace 1,2,...,N_{\mathcal{R}} \rbrace$. From Theorem \ref{ctrtheo}, we know that there exists a control $u^1_e:[0,T] \rightarrow \mathbb{R}_+$ for each $e \in \mathcal{E}$ such that the solution $\mathbf{x}(t)$ of the control system \eqref{eq:OLsys2} satisfies $\mathbf{x}(T)= \mathbf{x}^{in}$. Now we will design $\lbrace u_e\rbrace_{e \in \mathcal{E}}$ such that $u_e(t) = u^1_e(t)$ for each $t \in [0,T]$ and $u_e(t) = a_e$ for each $t \in (T,\infty]$, where $a_e$ is defined as follows:
	%For each $e\in \mathcal{E}$ and each $t \in [0,\infty)$ we set $u^2_e(t) = 0$ if $S(e) \in \mathcal{V}_n$, $T(e) \in \mathcal{V}_m$ and $n \neq m$.
	%For each $e\in \mathcal{E}$, we set 
	\[
	a_e= 
	\begin{cases} 
	0 &\text{if } S(e) \in \mathcal{V}_ n \text{ and } T(e) \notin \mathcal{V}_n \hspace{0.5mm} ~~\forall 1 \leq n \leq N_{\mathcal{R}},   \\
	0 &\text{if } S(e)=\mathcal{I}(n) ~\text{ for some } 1 \leq n \leq N_{\mathcal{R}},  \\
	1 & \text{otherwise.}
	\end{cases}
	\]
	Then the solution of system \eqref{eq:OLsys2} for $t>T$ can be constructed from the solution of the following decoupled set of ODEs:
	\begin{eqnarray}
	\label{eq:Asymsys}
	\dot{\mathbf{y}}_n(t) &=& -\mathcal{L}_{out}(\mathcal{\tilde{G}}_n)\mathbf{y}_n(t), \hspace{3mm} t \in [T, \infty) \\ \nonumber
	\mathbf{y}_n(T) &=& \mathbf{y}_n^0 \in \mathcal{P}(\mathcal{V}_n) 
	\end{eqnarray}
	for $1 \leq n \leq N_{\mathcal{R}}$. Here, $\mathcal{G}_n=(\mathcal{V}_n,\mathcal{E}_n)$ for each $1 \leq n \leq N_{\mathcal{R}}$, where $e \in \mathcal{E}_n$ if $S(e),T(e) \in \mathcal{V}_n$, and $a_e =1$. The solution of system \eqref{eq:Asymsys} is related to the solution of system \eqref{eq:OLsys2} with $\mathbf{x}(T) = \mathbf{x}^{in}$ through a suitable permutation matrix $\mathbf{P}$: $\mathbf{P}\mathbf{x}(t) = [\mathbf{y}_1(t) ~ \mathbf{y}_2(t) ~....
	~\mathbf{y}_{N_{\mathcal{R}}}(t)]$. Since each graph $\mathcal{G}_n$ has a rooted in-branching subgraph, the process generated by $ -\mathcal{L}_{out}(\mathcal{\tilde{G}}_n)^T$ has a unique stationary distribution. Moreover, by construction, this unique, globally stable stationary distribution is the vector $[x^d_{\mathcal{I}(n)} ~ \mathbf{0}_{1 \times (|\mathcal{V}_n|-1)}]^T$, where $|\mathcal{V}_n|$ is the cardinality of the set $\mathcal{V}_n$. This implies that $\lim _{t \rightarrow \infty}\mathbf{P}^{-1}\mathbf{y}(t) =\lim _{t \rightarrow \infty} \mathbf{x}(t) = \mathbf{x}^d$. By concatenating the control inputs $\lbrace u^1_e \rbrace_{e \in \mathcal{E}}$ and $\lbrace a_e \rbrace_{e \in \mathcal{E}}$, we obtain the desired asymptotic controllability result.
\end{proof}

An interesting aspect of the above proof is its implication that asymptotic controllability is achievable with piecewise constant control inputs with a finite number of pieces. From the above result, it follows that any point in $\mathcal{P}(V)$ can be stabilized using a full-state feedback controller \cite{clarke1997asymptotic}. However, for a general target equilibrium distribution, a stabilizing controller with a decentralized structure might not exist.

Before we present an algorithm to construct polynomial feedback control laws, it is important that we address the feasibility of Problem \ref{prob:Control}. Toward this end, we will investigate the stabilizability of the system (\ref{eq:OLsys2}).

\subsection{Stabilizability} \label{section:stabilize}
We will prove stabilizability by constructing an explicit control law for a graph of arbitrary size that fulfills all the conditions of Problem \ref{prob:Control}.
We propose the following decentralized control law, which depends on the agent densities in different states, and prove that the resulting closed-loop system is asymptotically stable. %We allow the controls to be density-dependent feedback laws.  
For $i,j \in \lbrace 1,...,M\rbrace$, let $g_i(x_i(t)) = (x_i(t)-x_i^d)^2$, and let $w_{ij}=1$ if $(i,j)\in \mathcal{E}$ and $0$ otherwise. Define a transition rate (control) matrix $\mathbf{G}(\mathbf{x}(t))$ with the following entries:
\begin{eqnarray}
	G^{ij} = 
	\begin{cases} 
		~~w_{ij}(g_i+g_j), & \hspace{3mm} i\neq j \nonumber \\
		-\sum\limits_{k=1}^M w_{1k}(g_i+g_k), & \hspace{3mm} i = j. 
	\end{cases} \label{eq:control} \\
\end{eqnarray}
$\mathbf{G}$ thus defined satisfies all the properties of a transition rate matrix described in Section \ref{section:problem statement}; that is, each row sums to 1 and each element is non-negative. It is clear that when $x_i(t)=x_i^d$ for all $i \in \mathcal{V}$, then all $g_i = 0$, resulting in $\mathbf{G}(\mathbf{x}^d)=\mathbf{0}$, which satisfies our requirement that the control parameters equal zero at equilibrium. The second requirement of a decentralized control structure is enforced by setting $w_{ij} = 0$ whenever $(i,j) \neq \mathcal{E}$. All that remains is to prove that, with this choice of $\mathbf{G}$, the closed-loop system is asymptotically stable.

\begin{proposition}
	The closed-loop system
	\begin{equation}
		\dot{\mathbf{x}}(t)=\mathbf{G}(\mathbf{x}(t))^T\mathbf{D}\mathbf{x}(t)
		\label{eq:CLsys2}
	\end{equation}
	with $\mathbf{G}$ defined as in Equation (\ref{eq:control}) is asymptotically stable. 
\end{proposition}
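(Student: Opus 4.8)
The plan is to construct a strict Lyapunov function on the probability simplex. First observe that bidirectedness of $\mathcal{G}$ gives $w_{ij}=w_{ji}$, so $\mathbf{G}(\mathbf{x})$ is symmetric for every $\mathbf{x}$, and, writing $c_{ij}(\mathbf{x}) := w_{ij}\bigl(g_i(x_i)+g_j(x_j)\bigr)\ge 0$, the matrix $-\mathbf{G}(\mathbf{x})$ is a state-dependent weighted graph Laplacian. Since $g_i(x_i^d)=0$ for all $i$, we get $\mathbf{G}(\mathbf{x}^d)=\mathbf{0}$, so $\mathbf{x}^d$ is an equilibrium of \eqref{eq:CLsys2}. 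Because $\mathbf{G}(\mathbf{x})$ always has zero row sums and nonnegative off-diagonal entries, the simplex $\mathcal{P}(\mathcal{V})$ is forward invariant (mass is conserved, since $\mathbf{1}^{\top}\mathbf{G}(\mathbf{x})^{\top}=(\mathbf{G}(\mathbf{x})\mathbf{1})^{\top}=\mathbf{0}^{\top}$, and every face $\{x_i=0\}$ is non-attracting, since $\dot x_i\ge 0$ there); moreover the vector field is polynomial and $\mathcal{P}(\mathcal{V})$ is compact, so solutions starting in $\mathcal{P}(\mathcal{V})$ exist for all $t\ge 0$. It therefore suffices to prove asymptotic stability of $\mathbf{x}^d$ relative to $\mathcal{P}(\mathcal{V})$. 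Note that the hypothesis of Problem \ref{prob:Control} makes $\mathbf{x}^d$ strictly positive, so $\mathbf{D}=\operatorname{diag}(1/x_1^d,\dots,1/x_M^d)$ is well defined; this is the diagonal matrix I would take.

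Second, I would propose the candidate Lyapunov function $V(\mathbf{x})=\tfrac12\sum_{i=1}^M (x_i-x_i^d)^2/x_i^d$, a positive-definite quadratic with unique minimizer $\mathbf{x}^d$, hence positive definite on $\mathcal{P}(\mathcal{V})$. Using symmetry of $\mathbf{G}$ together with its zero row sums, the closed loop \eqref{eq:CLsys2} reads componentwise as $\dot x_i=\sum_{j\neq i} c_{ij}(\mathbf{x})\bigl(x_j/x_j^d-x_i/x_i^d\bigr)$. Writing $z_i:=x_i/x_i^d$ and differentiating $V$ along solutions, $\dot V=\sum_i (z_i-1)\sum_{j\neq i} c_{ij}(\mathbf{x})(z_j-z_i)$; symmetrizing the double sum (the standard identity $\sum_{i,j}c_{ij}a_i(a_j-a_i)=-\tfrac12\sum_{i,j}c_{ij}(a_i-a_j)^2$ with $a_i=z_i-1$) gives
\[
\dot V(\mathbf{x}) \;=\; -\tfrac12\sum_{i,j} w_{ij}\bigl(g_i(x_i)+g_j(x_j)\bigr)\Bigl(\tfrac{x_i}{x_i^d}-\tfrac{x_j}{x_j^d}\Bigr)^2 \;\le\; 0 .
\]

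Third --- and this is the crux --- I would show that on $\mathcal{P}(\mathcal{V})$ the equality $\dot V(\mathbf{x})=0$ forces $\mathbf{x}=\mathbf{x}^d$, so that $\dot V<0$ on $\mathcal{P}(\mathcal{V})\setminus\{\mathbf{x}^d\}$ and Lyapunov's direct method yields asymptotic stability (indeed global stability relative to the simplex), with no need for LaSalle's principle. Suppose $\dot V(\mathbf{x})=0$ but $g_k(x_k)>0$ for some $k$. Every summand above vanishes, so along each edge $(i,j)\in\mathcal{E}$ with $g_i(x_i)+g_j(x_j)>0$ we have $z_i=z_j$; this holds in particular for every edge incident to $k$, so $z_j=z_k$ for every neighbor $j$ of $k$. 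But $z_k\neq 1$ (since $x_k\neq x_k^d$), hence $x_j\neq x_j^d$ and $g_j(x_j)>0$ for each such $j$, and the argument can be iterated. By strong connectivity this propagates to all of $\mathcal{V}$, so $z_i=z_k$ for every $i$, i.e. $x_i=z_k\,x_i^d$. Summing over $i$, $1=\sum_i x_i=z_k\sum_i x_i^d=z_k\neq 1$, a contradiction. Hence $\{\mathbf{x}\in\mathcal{P}(\mathcal{V}):\dot V(\mathbf{x})=0\}=\{\mathbf{x}^d\}$, which finishes the proof.

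I expect the main obstacle to be exactly this invariant-set step: the normalization $\sum_i x_i=1$ is essential, since otherwise every point of the line $\{z_i\equiv\text{const}\}$ would annihilate $\dot V$ and $\mathbf{x}^d$ would fail to be isolated among the zeros of $\dot V$. A second point worth flagging is that linearization is useless here: because $g_i'(x_i^d)=0$ and $\mathbf{G}(\mathbf{x}^d)=\mathbf{0}$, the Jacobian of \eqref{eq:CLsys2} at $\mathbf{x}^d$ vanishes identically, so the equilibrium is non-hyperbolic and convergence is necessarily sub-exponential --- which is consistent with, and forced by, the requirement $\mathbf{G}(\mathbf{x}^d)=\mathbf{0}$. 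The remaining items (forward invariance of $\mathcal{P}(\mathcal{V})$, global existence of solutions, and positive definiteness of $V$) are routine.
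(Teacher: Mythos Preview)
Your proof is correct and follows essentially the same route as the paper: the Lyapunov function you propose coincides on $\mathcal{P}(\mathcal{V})$ with the paper's $V(\mathbf{x})=\tfrac12\bigl(\mathbf{x}^{\top}\mathbf{D}\mathbf{x}-1\bigr)$, and both computations lead to the same sum-of-squares expression $\dot V=-\tfrac12\sum_{i,j}w_{ij}(g_i+g_j)(z_i-z_j)^2$. Your propagation argument for $\{\dot V=0\}=\{\mathbf{x}^d\}$ is in fact more complete than the paper's, which simply asserts that $\dot V=0$ forces $r_i=r_j$ for all $i,j$ without explicitly ruling out the alternative $g_i+g_j=0$ along some edges.
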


\begin{proof} 
	For ease of representation, we will use this system description rather than the equivalent system (\ref{eq:CLsys1}).
	To prove the stability of this system, we propose the following candidate Lyapunov function:
	\begin{equation}
		V(\mathbf{x}) = \frac{1}{2}\left(\mathbf{x}(t)^T\mathbf{D}\mathbf{x}(t)-(\mathbf{x}^d)^T \mathbf{D} \mathbf{x}^d\right).
		\label{eq:V}
	\end{equation}
	
	We now check the conditions for this function to be a Lyapunov function for the desired equilibrium point $\mathbf{x}^d$. We clearly have that $V(\mathbf{x}^d)=0$. To prove that $V(\mathbf{x})>0$ for all $\mathbf{x}\in \mathcal{P(V)}\backslash\lbrace \mathbf{0} \rbrace$, we note the following:
	\begin{align}
		V(\mathbf{x}) &= \frac{1}{2}\left(\mathbf{x}(t)^T\mathbf{D}\mathbf{x}(t)-(\mathbf{x}^d)^T \mathbf{D} \mathbf{x}^d\right) \nonumber\\
		& = \frac{1}{2}\left((\mathbf{D}^\frac{1}{2} \mathbf{x})^T (\mathbf{D}^\frac{1}{2} \mathbf{x}) -1\right) \nonumber\\
		&= \frac{1}{2}\left(\langle \mathbf{D}^\frac{1}{2} \mathbf{x},\mathbf{D}^\frac{1}{2} \mathbf{x} \rangle-1\right). \label{eq:LyapPos}
	\end{align}
	%&=\frac{1}{2}\mathbf{x}(t)^T \mathbf{D}^\frac{1}{2} \mathbf{D}^\frac{1}{2} \mathbf{x}(t)-1 \nonumber\\
	We will now show that the minimum value that $\langle\mathbf{D}^\frac{1}{2} \mathbf{x},\mathbf{D}^\frac{1}{2} \mathbf{x} \rangle$ can attain on $\mathcal{P(V)}$ is $1$, which is possible only at $\mathbf{x}^d$, guaranteeing strict positivity of the expression (\ref{eq:LyapPos}) for any other $\mathbf{x} \in \mathcal{P(V)}$. We apply the following coordinate transformation to shift the simplex associated with $\mathcal{P(V)}$, so that $\mathbf{x}^d$ coincides with the origin. Let $\mathbf{y}=\mathbf{x}-\mathbf{x}^d$. Then, 
	\begin{equation}
		\sum_{i=1}^M y_i = \sum_{i=1}^M (x_i-x^d) = 0 \label{(eq:ysum)},
	\end{equation}
	and therefore,
	\begin{align}
		\langle \mathbf{D}^\frac{1}{2} \mathbf{x},\mathbf{D}^\frac{1}{2} \mathbf{x} \rangle &= \langle \mathbf{D}^\frac{1}{2} (\mathbf{y+x}^d),\mathbf{D}^\frac{1}{2} (\mathbf{y+x}^d) \rangle \nonumber \nonumber \\
		&= \langle \mathbf{y},\mathbf{D} \mathbf{y} \rangle + 2\langle \mathbf{y},\mathbf{D} \mathbf{x}^d \rangle + \langle \mathbf{x}^d,\mathbf{D} \mathbf{x}^d \rangle \nonumber \\
		&=  \langle \mathbf{y},\mathbf{D} \mathbf{y} \rangle + 1
	\end{align}
	
	%&= \langle \mathbf{D}^\frac{1}{2} \mathbf{y},\mathbf{D}^\frac{1}{2} \mathbf{y} \rangle+
	%2\langle \mathbf{D}^\frac{1}{2} \mathbf{y},\mathbf{D}^\frac{1}{2} \mathbf{x}^d \rangle + \nonumber \\
	
	Since $\mathbf{Dx}^d=\mathbf{1}$ and $\langle \mathbf{y},\mathbf{1}\rangle=0$ (this follows from Equation (\ref{(eq:ysum)})), the function (\ref{eq:LyapPos}) is positive on all $\mathbf{x}\in \mathcal{P(V)}\backslash\lbrace \mathbf{0} \rbrace$. 
	
	Lastly, we compute the time derivative of the candidate Lyapunov function:
	\begin{align}
		\dot{V}(\mathbf{x}(t)) &= \frac{1}{2}\dot{\mathbf{x}}(t)^T\mathbf{Dx}(t) + \frac{1}{2}\mathbf{x}(t)^T\mathbf{D}\dot{\mathbf{x}}(t) \nonumber \\
		&= \frac{1}{2}(\mathbf{G}^T\mathbf{Dx}(t))^T\mathbf{Dx}(t) + \frac{1}{2}\mathbf{x}^T(t)\mathbf{D}(\mathbf{G}^T\mathbf{D}\mathbf{x}(t)) \nonumber \\
		&= \mathbf{x}(t)^T(\mathbf{DGD})\mathbf{x}(t). 
		\label{eq:Vdot}
	\end{align}
	For the equilibrium $\mathbf{x}^d$ to be asymptotically stable, we must have $\dot{V}(\mathbf{x}(t))<0$, $\forall \mathbf{x}\in \mathcal{P(V)} \backslash \lbrace 0 \rbrace $. Negative semi-definiteness of  $\dot{V}$ is guaranteed by the fact that $\mathbf{G}(\mathbf{x}(t))$ is a transition rate matrix. Strict negativity of  $\dot{V}$ can be confirmed by algebraic manipulation of expression (\ref{eq:Vdot}) as follows.
	Setting $r(t) = x(t)/x^d$, we obtain:
	\begin{align}
		\dot{V}(\mathbf{x}(t)) &= (\mathbf{Dx}(t))^T \mathbf{G}(\mathbf{x}(t)) (\mathbf{Dx}(t)) \nonumber \\
		&= \mathbf{r}(t)^T\mathbf{G}(\mathbf{x}(t))\mathbf{r}(t) \nonumber\\
		&= \sum\limits_{i,j=1,i\neq j}^M -(r_i-r_j)^2 w_{ij}(g_i+g_j).
		\label{eq:Vdot2}
	\end{align}
	The expression (\ref{eq:Vdot2}) is a negative sum-of-squares (SOS) and thus equals zero only when $r_i=r_j$ for all $i, j$, which is possible only at $\mathbf{x}(t) = \mathbf{x}^d$. Hence, this function is strictly negative for all $\mathbf{x}\in \mathcal{P(V)}\backslash\lbrace \mathbf{0} \rbrace$.
	
	In summary, the function (\ref{eq:V}) fulfills all the criteria of a Lyapunov function, thus proving asymptotic stability of the the closed-loop system \eqref{eq:CLsys2}.
\end{proof}

\section{COMPUTATIONAL APPROACH}

In this section, we briefly discuss how decentralized nonlinear controls can be constructed algorithmically. By describing an algorithmic procedure, we hope to demonstrate that additional constraints can be added, to improve the performance of the closed loop system. We will construct control laws that are polynomial functions of the state of the system. We will take the aid of SOSTOOLS, short for Sum-of-Squares toolbox, used for polynomial optimization. SOSTOOL has been a very popular method to provide algorithmic solution of problems that can be formulated as polynomial non-negative constraints that are otherwise difficult to solve \cite{prajna2002introducing}. In this methods non-negativity constraint is relaxed to the existence of a SOS decomposition, which is then tested using Semidefinite programming. A point to be noted here is that the procedure described below is one of the possible methods to construct such control laws. We now pose Problem \ref{prob:Control} as an optimization problem.

\begin{problem}
Let, 
\begin{equation}
\mathcal{P(V)}= \lbrace(x_1,...,x_n) \in \mathbb{R}^n | x_i \geq 0, \hspace{1mm} \sum\limits_{i=1}^n x_i = 1, \forall i \rbrace. 
\label{eq:Simplex}
\end{equation}
Let $\mathbb{R}[x]$ represent the set of polynomials and $\Sigma_s$ denote the set of SoS polynomials. \\
Consider the system (\ref{eq:CLsys2}), which is of the form $\dot{x}=g(x)u(t)$, $u(t)=k(x)$ \\
Given, matrix $\mathbf{B}_i$ and Lyapunov function $V(\mathbf{x})$.\\
Find, $u(x) \in \mathbb{R}[x]$ such that,
\begin{align}
u (x) \geq 0 \\
u(x^d) = 0 \\
\nabla V(x)^T g(x)k(x) \leq 0 \label{eq:GradV}
\end{align}
for all $x \in \mathcal{P(V)}$
\end{problem}

Here, we are using the same Lyapunov function (\ref{eq:V}) used in Section (\ref{section:stabilize}) to prove stabilizability. We have already established that it has zero magnitude at equilibrium $\mathbf{x}^d$ and is positive everywhere on $\mathcal{P(V)}$. Hence, we only need to test for its gradient's negative definiteness, which is being encoded here. In this construction we are fixing the candidiate Lyapunov function and constructing a control law such that the \ref{eq:V} is indeed a Lyapunov function for the closed loop system \ref{eq:CLsys1}. Alternatively, one could search for both the Lyapunov function and the control law together, but this renders the problem bilinear in the 2 variables. Iterating between the two variables is one way to get around this problem.

To implement (\ref{eq:GradV}), that is, to show local negative definiteness of the gradient (on the simplex $\mathcal{P(V)}$), we use the following result well known in literature on \textit{positivestellansatz}, known as \textit{Schmudgen's} positivestellansatz, \cite{schweighofer2005optimization}.
\begin{theorem}
Suppose $S=\lbrace x:g_i(x) \geq 0, \hspace{1mm} h_i(x) =0\rbrace$ is compact. If $f(x) \geq 0$ for all $x \in S$, then there exist $s_i, r_{ij},...\in \Sigma_s$ and $t_i \in \mathbb{R}[x]$ such that,
\begin{align}
f = &1+ \sum_j t_jh_j +s_0 + \sum_i s_ig_i + \sum_{i \neq j}r_{ij}g_ig_j +\nonumber \\ 
&\sum_{i \neq j \neq k}r_{ijk}g_i g_j g_k + ...
\end{align}
\end{theorem}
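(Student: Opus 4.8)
This is the classical Schm\"udgen Positivstellensatz (here with equality constraints $h_j=0$ adjoined), so I would not reprove it from first principles; the plan is to reduce it to a few standard building blocks from real algebraic geometry and assemble them, reading the conclusion in its usual denominator-free form --- that $f$ lies in the preordering $T$ generated by the $g_i$ modulo the ideal generated by the $h_j$. \textbf{First building block:} the abstract Krivine--Stengle Positivstellensatz, which is valid for an \emph{arbitrary} basic closed semialgebraic set with no compactness hypothesis: $f>0$ on $S$ iff $pf = 1 + q + \sum_j t_j h_j$ for some $p,q\in T$ and $t_j\in\mathbb{R}[x]$, and $f\ge 0$ on $S$ iff $pf = f^{2m} + q + \sum_j t_j h_j$ for some $p,q\in T$, $t_j\in\mathbb{R}[x]$, $m\in\mathbb{N}$. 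By itself this yields only a \emph{rational} certificate; the denominator $p$ is the obstruction, and compactness is exactly what will remove it.

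\textbf{Second building block (the heart of the matter):} compactness of $S$ forces $T$ to be \emph{archimedean}, i.e.\ for every $a\in\mathbb{R}[x]$ there is an $N\in\mathbb{N}$ with $N-a\in T$. The plan here is: (i) pick $N$ with $S\subseteq\{x:\sum_i x_i^2<N\}$, which exists since $S$ is bounded; (ii) apply the abstract Positivstellensatz to the strictly positive polynomial $N-\sum_i x_i^2$ to obtain $p\,(N-\sum_i x_i^2) = 1+q+\sum_j t_j h_j$ with $p,q\in T$; (iii) clear the denominator $p$ by an induction on degree, concluding $N-\sum_i x_i^2\in T$ (the Berr--W\"ormann argument; an elementary self-contained variant avoiding operator theory is also available); (iv) bootstrap from $N-\sum_i x_i^2\in T$ to archimedean-ness for every polynomial $a$ by squeezing $a$ between $\pm$ a suitable power of $c+\sum_i x_i^2$ and using that $T$ is a preordering (closed under sums and products, containing all squares and all nonnegative reals).

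\textbf{Third building block:} once $T$ is archimedean, an abstract representation theorem of Kadison--Dubois / Stone--Weierstrass type delivers the punchline --- $f>0$ on $S$ implies $f\in T$. If one wants this by hand: from $pf = 1+q+\sum_j t_j h_j$ and an archimedean bound $\lambda-p\in T$, rescale so that $f$ is bounded below by a positive constant on $S$ and write $f$ via a truncated geometric-series identity together with $4ab=(a+b)^2-(a-b)^2$; the archimedean bound is precisely what certifies that the remainder of the truncation lies in $T$. The equality constraints ride along in the term $\sum_j t_j h_j$ throughout, and the displayed expansion $f = s_0 + \sum_i s_i g_i + \sum_{i\ne j} r_{ij} g_i g_j + \cdots$ is then just a matter of sorting the resulting element of $T$ according to which squarefree product of the $g_i$ each summand is built on.

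\textbf{Main obstacle.} Essentially all the difficulty sits in the second step --- upgrading rational certificates to genuine membership in $T$, i.e.\ proving $N-\sum_i x_i^2\in T$. The abstract Positivstellensatz hands over only the identity $p\,(N-\sum_i x_i^2) = 1+q+\sum_j t_j h_j$ with a genuinely present denominator $p\in T$, and eliminating $p$ requires a delicate degree induction (in Schm\"udgen's original proof this step is replaced altogether by the spectral theorem and the solution of the multidimensional moment problem). By contrast, the first building block is a citable black box, the third is a short manipulation once archimedean-ness is in hand, and the handling of the $h_j$ is pure bookkeeping.
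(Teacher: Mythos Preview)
The paper does not prove this theorem at all: it is quoted as a known result from the literature on Positivstellens\"atze, with a citation to Schweighofer. So there is no ``paper's own proof'' to compare against; any argument you supply goes strictly beyond what the paper does. Your outline is a faithful sketch of the standard route to Schm\"udgen's theorem (abstract Krivine--Stengle Positivstellensatz $\Rightarrow$ archimedeanity of the preordering via compactness $\Rightarrow$ Kadison--Dubois type representation), and your identification of the denominator-elimination step as the crux is exactly right.

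One point you should flag explicitly rather than gloss over: the theorem as stated in the paper is not quite correct, and your argument does not (and cannot) establish it as written. Schm\"udgen's theorem requires \emph{strict} positivity $f>0$ on $S$, not merely $f\ge 0$; and the conclusion is that $f$ lies in the preordering $T$ modulo the ideal generated by the $h_j$, i.e.\ a representation \emph{without} the leading ``$1+$''. With $f\ge 0$ and the ``$1+$'' present, the displayed identity is impossible in general: pick any $x_0\in S$ with $f(x_0)=0$ (so all $h_j(x_0)=0$ and all $g_i(x_0)\ge 0$) and evaluate both sides to obtain $0 = 1 + (\text{nonnegative})$. Your third building block already silently switches to the correct hypothesis $f>0$ and drops the ``$1$''; it is worth saying so, since otherwise a reader might think the argument proves the statement exactly as displayed.
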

 
This theorem gives sufficient conditions for positivity of the function $f$ on a semi-algebraic set (\ref{eq:Simplex}). In our case, this translates to looking for $t_i \in \mathbb{R}[x]$ and $s_i \in \Sigma_s$ such that
\begin{align}
-\frac{\partial V}{\partial t}  = \frac{\partial V}{\partial x}f-(th+s_0+\Sigma_i s_i g_i)
\end{align}
where $f$ is the vector field, $h$ is the equality constraint in (\ref{eq:Simplex}) and $g_i$ are the combinations of the inequalities in (\ref{eq:Simplex}).

\section{NUMERICAL SIMULATIONS}

We computed two types of feedback controllers for the closed-loop system (\ref{eq:OLsys2}) to redistribute populations of $N=100$ and $N=1000$ agents on the five-vertex chain graph in Fig. \ref{fig:graph}.  The first controller ({\it Case 1}) was computed using SOSTOOLS, as described in the previous section, and the second controller ({\it Case 2}) was defined according to Equation \eqref{eq:control}.
In both cases, the initial distribution was $\mathbf{x}^0 = [0.4 \hspace{2mm} 0.1 \hspace{2mm} 0.05 \hspace{2mm} 0.35 \hspace{2mm} 0.1]^T$, and the desired distribution was $\mathbf{x}^d = [0.1 \hspace{2mm} 0.2 \hspace{2mm} 0.25 \hspace{2mm} 0.4 \hspace{2mm} 0.05]^T$. 

The solution of the mean-field model with each of the two controllers and the trajectories of a corresponding stochastic simulation are compared in Figures (\ref{fig:ConstructControl_1000})-(\ref{fig:SOSControl_100}).  To speed up the convergence rate to equilibrium, all the controller gains were multiplied by a factor of 10. Also, for ease of comparison, the ODE solutions were scaled by the number of agents.  We  observe that the performance of the {\it Case 1} controller is better than that of the {\it Case 2 controller}.   We note that if faster convergence to the equilibrium is desired, this could be encoded as constraint in SOSTOOLS.  As discussed in Section \ref{section:problem statement}, the underlying assumption of using the mean-field model \eqref{eq:OLsys2} is that the  swarm behaves like a continuum. That is, the ODE \eqref{eq:OLsys2} is valid as number of agents $N \rightarrow \infty$. Hence, it is imperative to check the performance of the feedback controller for different agent populations.  We observe that the the stochastic simulation follows the ODE solution quite closely in all four simulations.  In addition, in all simulations, the numbers of agents in each state remain constant after some time; in the case of 100 agents, the fluctuations stop earlier than in the case of 1000 agents. This is due to the property of the feedback controllers that as the agent densities approach their desired equilibrium values, the transition rates tend to zero. This effect is shown explicitly in Fig. \ref{fig:AgentPath}, which plots the time evolution of a two agents' state (vertex number) during a stochastic simulation with both of the controllers. For both controllers, the agent's state remains constant after a certain time.

\begin{figure}
	\centering
	\includegraphics[width= \linewidth]{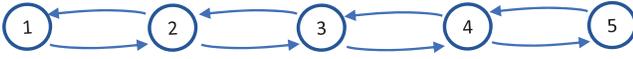}
	\caption{Five-vertex bidirected chain graph.}
	\label{fig:graph}
\end{figure}

\begin{figure}
	\centering
	\includegraphics[width= \linewidth]{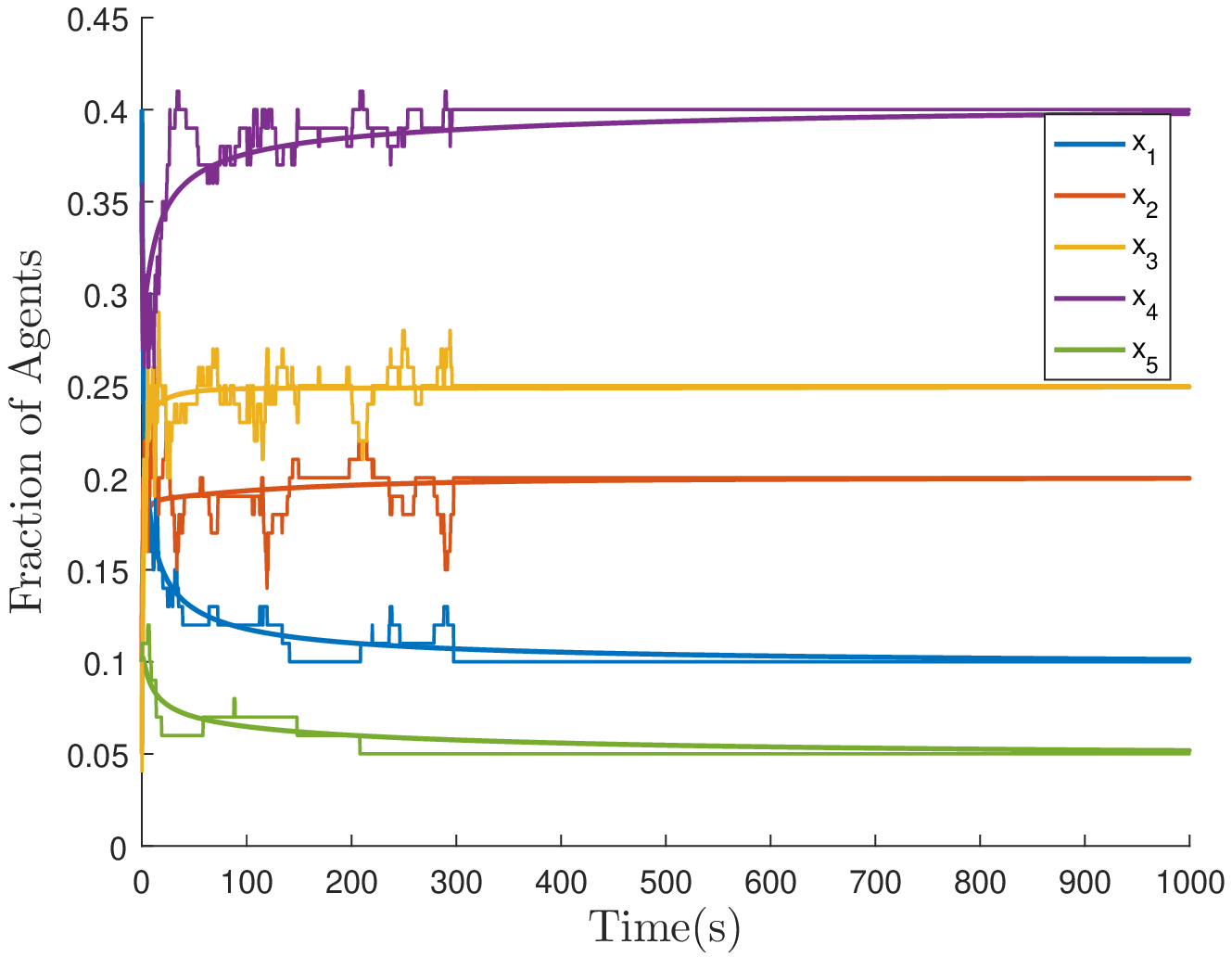}
	\caption[caption]{Trajectories of the mean-field model {\it (thick lines)} and the corresponding stochastic simulation {\it (thin lines)} for the {\it Case 1} closed-loop controller with $N=100$ agents.}
	\label{fig:ConstructControl_1000}
\end{figure}

\begin{figure}
	\centering
	\includegraphics[width= \linewidth]{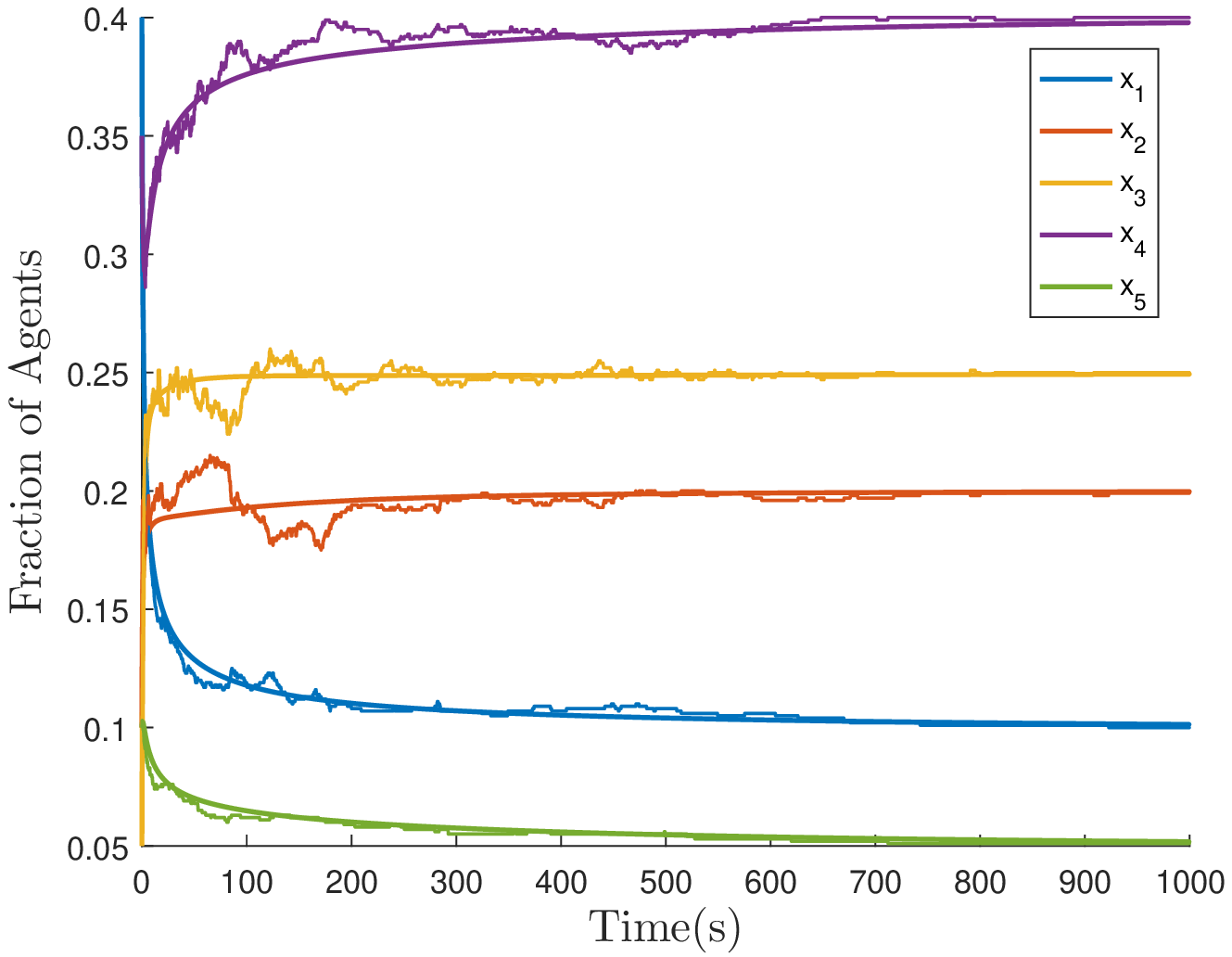}
	\caption[caption]{Trajectories of the mean-field model {\it (thick lines)} and the corresponding stochastic simulation {\it (thin lines)} for the {\it Case 1} closed-loop controller with $N=1000$ agents.}
	\label{fig:SoSControl_1000}
\end{figure}

\begin{figure}
	\centering
	\includegraphics[width= \linewidth]{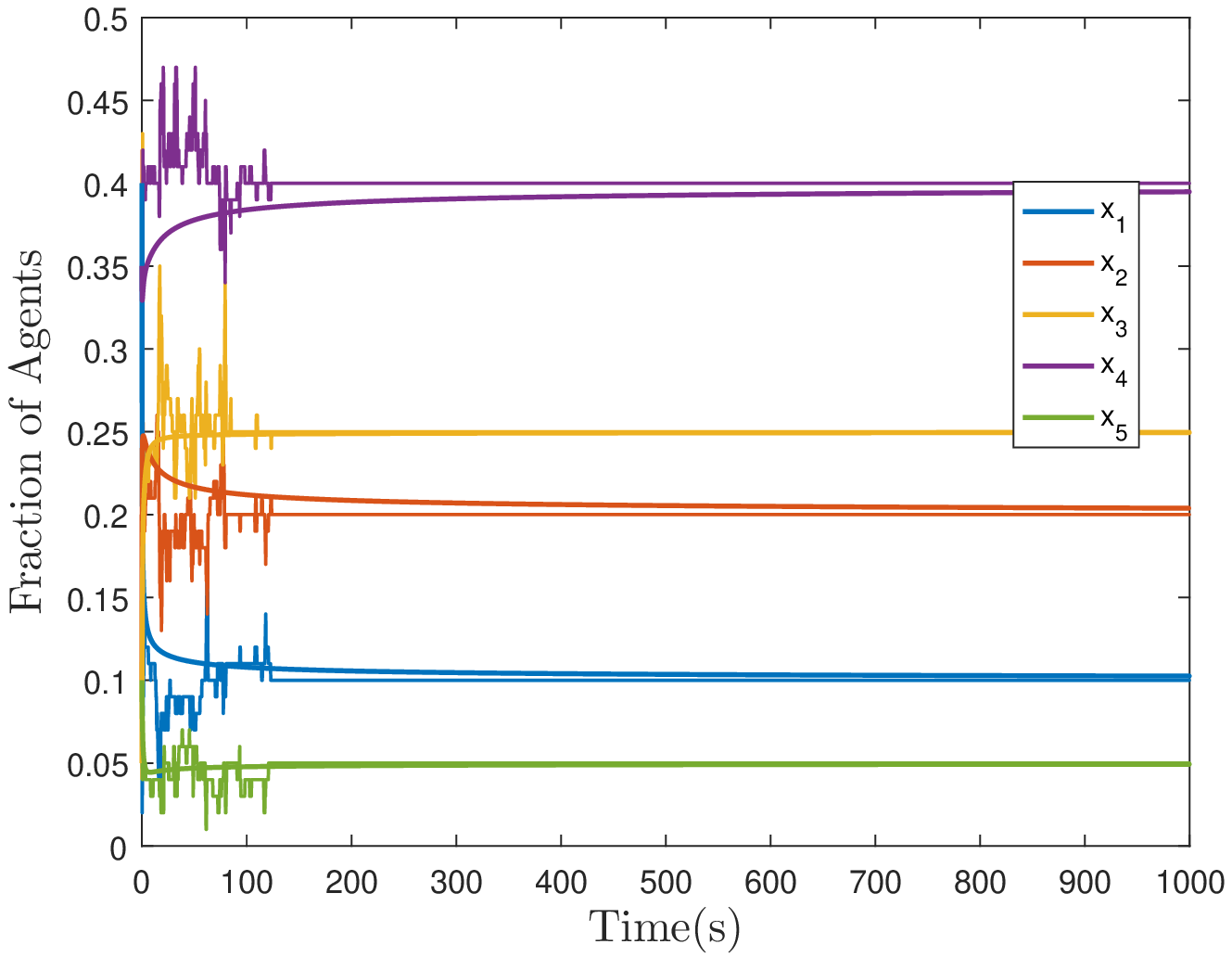}
	\caption[caption]{Trajectories of the mean-field model {\it (thick lines)} and the corresponding stochastic simulation {\it (thin lines)} for the {\it Case 2} closed-loop controller with $N=100$ agents.}
	\label{fig:ConstructControl_100}
\end{figure}

\begin{figure}
	\centering
	\includegraphics[width= \linewidth]{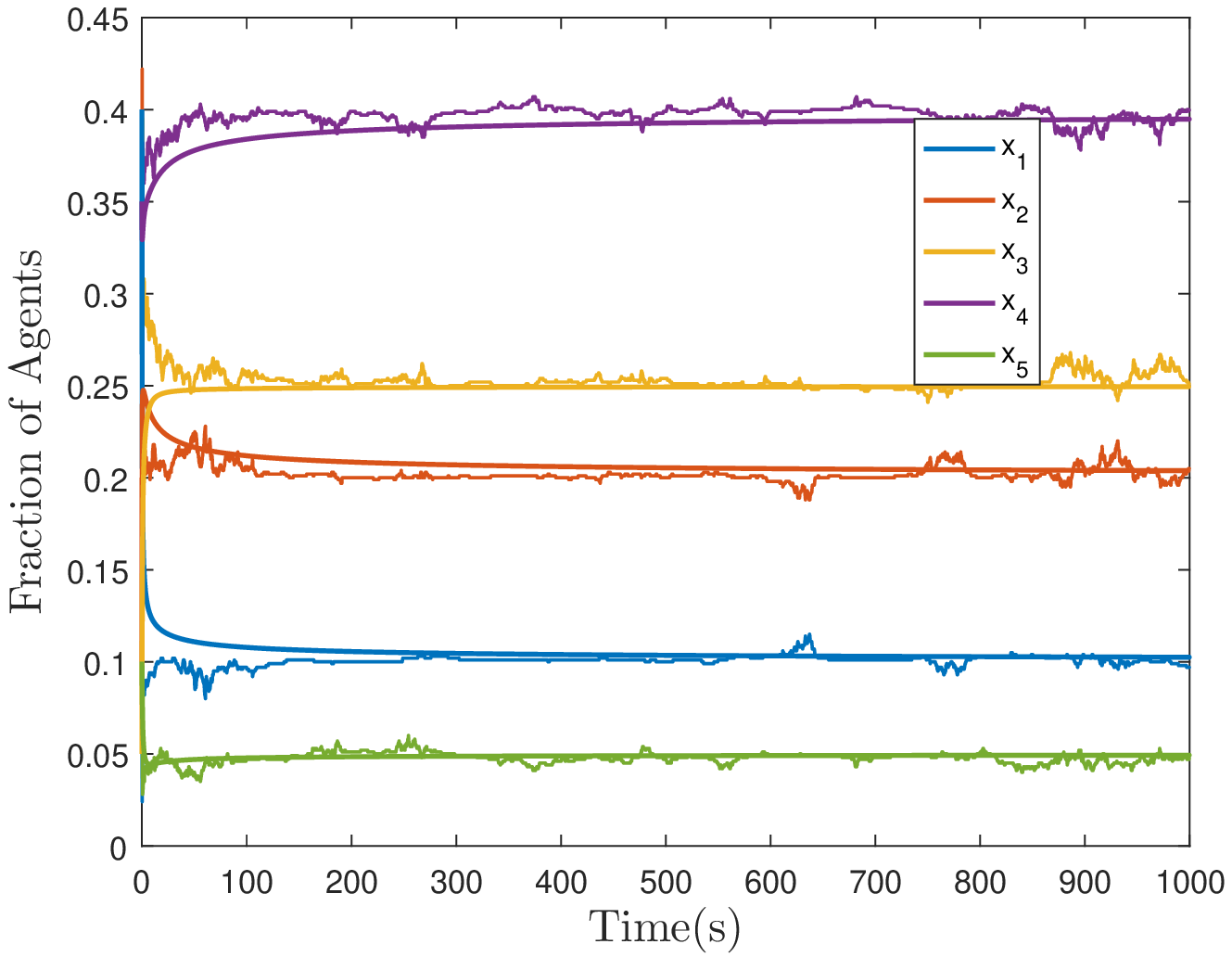}
	\caption[caption]{Trajectories of the mean-field model {\it (thick lines)} and the corresponding stochastic simulation {\it (thin lines)} for the {\it Case 2} closed-loop controller with $N=1000$ agents.}
	\label{fig:SOSControl_100}
\end{figure}

\begin{figure}
	\centering
	\includegraphics[width= \linewidth]{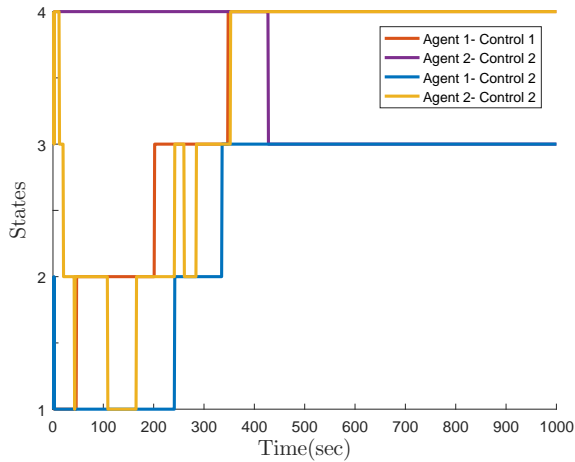}
	\caption[caption]{State (vertex number) of two agents over time, during stochastic simulations of the closed-loop system with the {\it Case 1} and {\it Case 2} controllers in the loop.}
	\label{fig:AgentPath}
\end{figure}

\section{CONCLUSION}
In this paper, we have presented a novel approach to mean-field feedback stabilization of a swarm of agents that switch stochastically among a set of states according to a continuous time Markov chain.  We proved that a desired state distribution with strongly connected support can be stabilized using time-invariant control inputs. We also showed asymptotic controllability of distributions that are not strictly positive, with target densities equal to zero for some states. Lastly, for bidirected, strongly connected graphs, we proved stabilizability of the closed-loop system by explicitly constructing decentralized, density-dependent control laws that equal zero at equilibrium. Furthermore, we presented and numerically validated a procedure for designing polynomial feedback control laws algorithmically using the SOSTOOLS MATLAB toolbox. In summary, by using nonlinear feedback control laws, we obtain guarantees on global boundedness of the controls and are able to prove global asymptotic stability of the desired distribution.

In future work, we plan to investigate exponential stability of the closed-loop system and design control laws that optimize the convergence rate to equilibrium. Another direction of future work is to characterize the effect of noise in estimates of the agent densities on the convergence properties of the proposed control laws.

%\section{ACKNOWLEDGMENTS}
%The authors gratefully acknowledge the contribution of No one. 

\bibliographystyle{plain}
\bibliography{CDC2017}

\end{document}